\newcommand{\ket}[1]{\lvert #1\rangle}
\newcommand{\cmark}{\ding{51}}%
\newcommand{\xmark}{\ding{55}}%
\newcolumntype{M}[1]{>{\centering\arraybackslash}m{#1}}
\begin{document}


\title{Increasing Interference Detection in Quantum Cryptography using the Quantum Fourier Transform}
\titlerunning{Quantum Cryptography using the Quantum Fourier Transform}

\author{Nicholas J.C. Papadopoulos\orcidlink{0000-0002-6357-0030}\inst{1} \and Kirby Linvill\orcidlink{0000-0001-8835-3377}\inst{1}}

\institute{Department of Computer Science, University of Colorado Boulder, Boulder CO 80309, USA\\
\email{\{Nicholas.Papadopoulos, Kirby.Linvill\}@colorado.edu}}


\maketitle

\date{\today}
\begin{abstract}
Quantum key distribution (QKD) and quantum message encryption protocols promise a secure way to distribute information while detecting eavesdropping.
However, current protocols may suffer from significantly reduced eavesdropping protection when only a subset of qubits are observed by an attacker.
In this paper, we present two quantum cryptographic protocols leveraging the quantum Fourier transform (QFT) and show their higher effectiveness even when an attacker measures only a subset of the transmitted qubits.
The foremost of these protocols is a novel QKD method that leverages this effectiveness of the QFT while being more practical than previously proposed QFT-based protocols, most notably by not relying on quantum memory.
We additionally show how existing quantum encryption methods can be augmented with a QFT-based approach to improve eavesdropping detection.
Finally, we provide equations to analyze different QFT-based detection schemes within these protocols so that protocol designers can make custom schemes for their purpose.
\end{abstract}

\keywords{Quantum cryptography, key distribution, quantum Fourier transform.}

\section{Introduction}
Quantum key distribution (QKD) is a paradigm that employs quantum principles to exchange a secret key which can then be used to establish secure communication channels.
Unlike classical key agreement algorithms, QKD schemes allow participants to detect when an eavesdropper listens in on the exchanged key information.
Importantly, this eavesdropping detection relies on well-established properties of quantum mechanics, namely the no-cloning theorem~\cite{no_cloning} and state collapse upon measurement, instead of on the assumed hardness of problems such as the discrete log problem used in the classical Diffie-Hellman key exchange algorithm~\cite{diffie_hellman} or the semiprime factoring problem used in the Rivest–Shamir–Adleman (RSA)~\cite{rsa} public encryption algorithm.
This reliance on quantum mechanical properties is especially advantageous given that Shor~\cite{Shor_1994} devised a way to efficiently break these classical cryptographic schemes using a sufficiently large quantum computer.

The BB84 protocol, introduced by Bennett and Brassard in 1984~\cite{bb84}, was the first such QKD protocol.
Several other QKD schemes have been developed that rely on the same underlying principles.
Unfortunately, in many QKD schemes, including BB84, subsequently transmitted qubits are independent of previous qubits. 
This independence leaves these schemes susceptible to partial information leakage and the concept of a ``many-copies'' attack introduced in Sec.~\ref{sec:many-copies}, which is solely an issue in quantum-based protocols.
Partial information leakage allows an eavesdropper to significantly reduce the search space without needing to extract the whole message, while the ``many-copies attack'' allows an adversary to exploit vulnerabilities in conventional QKD methods by flooding the system with replicated quantum states.
Through the ``many-copies attack'' attack, an eavesdropper could potentially extract the entire message with certainty and without being detected.

Our work introduces a novel two-pass QKD approach employing the quantum Fourier transform (QFT)~\cite{nielsen_chuang_2010} to better resist these attacks.
Additionally, we extend the application of the QFT to direct message encryption, presenting a three-pass protocol that combines the benefits of confidential communication with an increase in eavesdropping detection.
As part of our contributions, we explore various eavesdropping detection schemes within QFT-based cryptographic protocols and discuss the implications of different encoding strategies, considering scenarios where carefully chosen qubits are measured by an eavesdropper.
This analysis offers insights into the design of robust eavesdropping detection schemes tailored to specific application requirements.

\paragraph*{Contributions} Our contributions are summarized as follows:
\begin{itemize}
    \item We introduce a novel QKD protocol that uses the QFT to improve eavesdropping detection (Sec.~\ref{sec:qft-qkd}).
    This scheme also provides an entirely quantum QKD protocol, allowing for eavesdropping detection without the need for an authenticated classical channel that other schemes require to protect against intercept-resend attacks.
    \item We show how Kak's well-known quantum message encryption scheme~\cite{threeStage} can be augmented  with the QFT, thereby improving its eavesdropping detection (Sec.~\ref{sec:qft-enc}).
    \item We introduce the many-copies attack, in which an attacker is able to use partial observations to reconstruct a whole shared key (Sec.~\ref{sec:many-copies}).
    \item We provide equations to analyze the probability of detecting an eavesdropper when using QFT-based eavesdropping detection schemes, which we call verification schemes (Sec.~\ref{sec:probability}).
    \item We compare different key examples of QFT-based verification schemes to show that our protocol is more effective in detecting eavesdroppers than other protocols, and we show how one might want to cater such a scheme to their specific case. (Sec.~\ref{sec:schemes}).
\end{itemize}

\section{Related Work}

QKD has been a booming field since the foundational BB84~\cite{bb84} protocol was developed.
Ekert subsequently proposed a protocol based on entanglement rather than the impossibility of perfectly discriminating non-orthogonal states~\cite{e91}.
However, this entanglement is only between qubits that represent the same bit of information.
Ekert's protocol lacks entanglement between subsequent qubits representing different bits of information.
As a result, Ekert's protocol is just as susceptible to the partial information leakage and ``many-copies'' attack as BB84.
Other, more recent QKD approaches derived from these works are reviewed in more depth by~\cite{qcSurvey}.
The theoretical development of these protocols has been accompanied by implementations on real hardware~\cite{Lucamarini:13,Zhang_2019,Korzh_2015}.

Unfortunately, real implementations have inspired attacks against the necessarily imperfect implementations of these schemes~\cite{qkd_security}.
For example, the photon number splitting (PNS) attack~\cite{pns_attack} takes advantage of the fact that physical implementations of QKD devices can emit more than one photon per pulse. Mitigations against these PNS attacks include the use of decoy states to verify that photon loss is within expected parameters~\cite{hwang2003decoy,lo2005decoy,wang2005decoy}. 
Since these mitigations against physical imperfections and side-channels are typically orthogonal to the underlying scheme, we do not focus on such mitigations for the scheme we introduce.

A few QKD schemes that entangle subsequent qubits have been proposed.
Similar to the QKD scheme we propose in this paper, Nagy, Akl, and Kershaw introduced a QKD method utilizing the quantum Fourier transform (QFT) that improved upon BB84-like protocols by causing entanglement if there is interference or a 3rd-party observation~\cite{nagy2008key}.
However, the proposed method of Nagy et al. relies on quantum memory, which may be seen as infeasible or unreliable with present technology.
In this paper, we propose a method that also uses the QFT, and therefore leverages its benefits such as a high sensitivity to 3rd-party observations, without relying on quantum memory to store qubits for some indeterminate amount of time.

Other methods using the QFT for QKD have also been proposed.
Xing-Yu Yan et al.~\cite{quditQft} proposed a QFT-based scheme to encode high-dimensional data using qudits.
They achieve higher information-density rates than other QKD schemes but rely on a pre-shared $d$-dimensional quantum state, which is not a standard assumption in QKD schemes as this would require trusted or authenticated quantum channels.
Xiaoqing Tan et al.~\cite{qkdQftWorkshop} proposed another scheme that distributes EPR pairs to two parties and applies the QFT to the qubits of one of the parties.
However, this protocol requires quantum memory, similar to Nagy et al., both while Bob waits for Alice to communicate the positions of the decoy photons and while Alice waits for Bob to communicate his measurement bases.

In addition to providing a QKD scheme based on the QFT, we provide a scheme in this paper to directly send encrypted messages.
Kak~\cite{threeStage} proposed a three-stage protocol that allows Alice to send an encrypted message directly to Bob. 
This eliminates the need for classical messaging while simultaneously protecting the confidentiality of the sent data and allowing for the detection of Eve's presence.
We propose a modified instance of Kak's protocol using the QFT, allowing for a higher sensitivity of detection under certain attacks while still directly sending encrypted messages.

\section{Background: Notable Attacks on QKD Protocols}

\subsection{Partial Information Leakage} \label{sec:partial-information}

QKD protocols aim to detect eavesdroppers on a quantum channel and therefore must be analyzed with respect to a quantum adversary, typically called Eve.
One of the simplest attacks Eve can carry out is the intercept-resend attack, in which Eve measures photons in transit in a chosen basis and then passes on any prepared state of her choosing, often the measured state. 
The BB84 protocol protects against this attack by having its participants, typically called Alice and Bob, choose a random subset of exchanged qubits to verify that they indeed match the expected measurement values.
This verification step is conducted over an authenticated classical channel after the qubits have all been exchanged.
In the case of BB84, an attacker is expected to be detected with probablity $1 - (.75)^{|V|}$ where $V$ is the set of verification qubits, meaning qubits used solely for detecting Eve's presence, and $|V|$ is the number of verification qubits.
However, this protection assumes the quantum adversary measures all the transmitted qubits.

An attacker could alternatively choose to only measure a subset of the qubits transmitted over the wire, resulting in only partial information leakage in exchange for a reduced chance of detection.
For the BB84 protocol, the expected chance of detection is $1 - (.75)^{Pr[q \in V] * |V|}$ where $Pr[q \in V]$ is the probability that a qubit $q$ Eve measures is a verification qubit.
In the average case, this results in a decreased chance of detection in exchange for partial information.
However, in the worst case Eve will not measure any verification qubits resulting in a probability of detection of $0$.
Furthermore, if the qubits that will be used as verification bits are predictable, perhaps by relying on a weak pseudo-random number generator~\cite{cwe-insufficient-randomness,cwe-prng} or perhaps by relying on a deterministic verification scheme, Eve can ensure she always only measures non-verification qubits and therefore always avoid detection.

Figure~\ref{fig:bb84-partial-leakage} shows an example of such an attack on the BB84 protocol.
Alice first generates a random message and encodes each qubit in a random choice of basis.
She then sends the qubits to Bob.
Eve chooses to measure half of the qubits.
Alice and Bob then announce the bases they transmitted/measured in and discard the bits for which they used different bases.
Alice and Bob then publicly choose the verification bits out of the remaining bits.
In this case Eve was lucky, and none of the bits she measured were verification bits, so she is not detected.

Alice and Bob can optionally go through a privacy amplification process~\cite{privacy-amplification} to reduce the information Eve learns about the exchanged key.
However, if Eve measured all of the exchanged key bits, she would on average learn half of the bits that she measured.
This is because she has a 50\% of randomly choosing the same basis in which Alice transmitted a qubit.
Since she knows about half of the exchanged key bits, Eve will then be able to significantly cut down the possible keys after privacy amplification from $2^{|k|}$ to $2^{|k|/2}$, where $|k|$ is the length of the derived key $k$ in bits, all while avoiding detection.

\begin{figure}
    \centering
    \includegraphics[width=\linewidth]{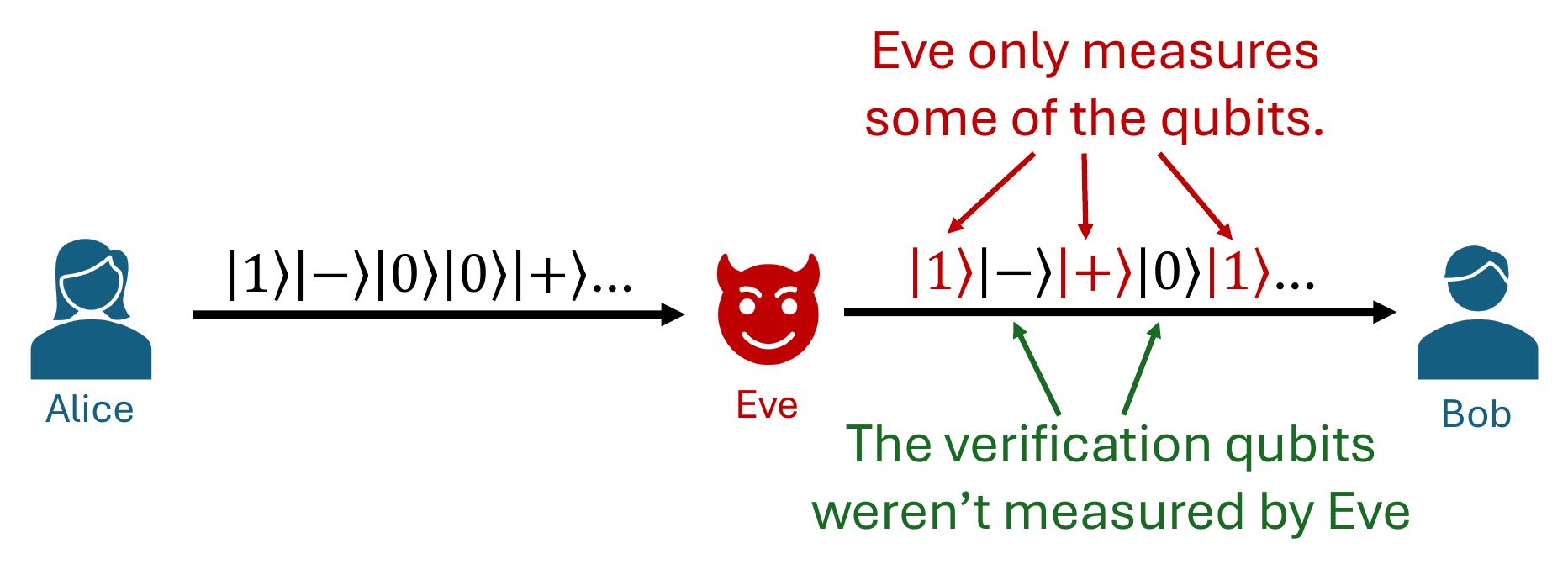}
    \caption{Example partial leakage from the BB84 protocol. Eve only measures some of the qubits Alice sends to Bob. She measures none of the verification qubits and therefore has no chance of detection.}
    \label{fig:bb84-partial-leakage}
\end{figure}

\subsection{The Many-Copies Attack}\label{sec:many-copies}

We introduce the concept of an attack where many copies of some state may be sent by Alice or Bob. 
Despite the no-cloning theorem~\cite{no_cloning}, which states it is impossible to clone an unknown quantum state, this scenario may arise by exploiting deterministic encodings of data into a known quantum state.
This could be done by any number of means, including maliciously by Eve planting a bug in their computers to create and send many copies of a transmission, or it could be due to an honest user-error or misconfiguration in either hardware or software. 
We call this the many-copies attack, and it can allow Eve to obtain information about bits in BB84 and similar protocols without detection.
By measuring some non-verification qubit in many different bases over many copies of the quantum state, she would be able to extract the intended value with certainty without detection.
Eve could do this for, in the worst case, each non-verification qubit sequentially across many copies of the state to retrieve the whole message undetected.
This is similarly a problem in Kak's three-stage protocol~\cite{threeStage}, where Eve could potentially retrieve $U_A^\dagger$ using similar techniques and apply it to the initial state to retrieve the original message.

\begin{table}[h]
    \caption{An example of the many-copies attack on the BB84 protocol. Eve remains undetected while finding the value of the secret bit.}
    \begin{center}
    \begin{tabular}{|c||c|c||c|c||}
        \hline
         & \multicolumn{2}{c||}{Original Copy} & \multicolumn{2}{c||}{Copy 2} \\
        \hline
         & Secret & Verification & Secret & Verification \\
        \hline
         \textbf{Alice's bits} & 1 & 0 & 1 & 0 \\
         \hline
         \textbf{Alice's sending basis} & + & $\times$ & + & $\times$ \\
         \hline
         \hline
         \textbf{Eve's measuring basis} & + & & + & \\
         \hline
         \textbf{Eve's measurement} & 1 & & 1 & \\
         \hline
         \hline
         \textbf{Bob's measuring basis} & + & $\times$ & $\times$ & +  \\
         \hline
         \textbf{Bob's measurement} & 1 & 0 & 1 & 1 \\
         \hline
         \textbf{Measurement bases match} & \cmark & \cmark & \xmark & \xmark \\
         \hline
         \textbf{Verification passes} &  & \cmark &  &  \\
         \hline
    \end{tabular}
    \begin{tabular}{|c||c|c||c|c||}
        \hline
         & \multicolumn{2}{c||}{Copy 3} & \multicolumn{2}{c||}{Copy 4} \\
        \hline
         & Secret & Verification & Secret & Verification \\
        \hline
         \textbf{Alice's bits} & 1 & 0 & 1 & 0 \\
         \hline
         \textbf{Alice's sending basis} & + & $\times$ & + & $\times$ \\
         \hline
         \hline
         \textbf{Eve's measuring basis} & $\times$ & & $\times$ & \\
         \hline
         \textbf{Eve's measurement} & 1 & & 0 & \\
         \hline
         \hline
         \textbf{Bob's measuring basis} & + & + & $\times$ & $\times$ \\
         \hline
         \textbf{Bob's measurement} & 0 & 1 & 0 & 0 \\
         \hline
         \textbf{Measurement bases match} & \cmark & \xmark & \xmark & \cmark \\
         \hline
         \textbf{Verification passes} & & & & \cmark \\
         \hline
    \end{tabular}
    \end{center}
    \label{tab:manyCopies}
\end{table}
Table~\ref{tab:manyCopies} shows an example of this attack.
Alice and Bob try to generate a 1-bit secret key and use a single verification bit to try to detect Eve.
Four total copies of the message from Alice to Bob are sent, and Eve measures only the secret key bit.
One can see that this has no effect on the verification bit, and so Eve will remain undetected.
Eve, on the other hand, measures each basis twice.
She sees that one basis is consistent while the other has multiple outputs.
She can then conclude that Alice measured in the basis that was consistent, and the intended bit value was the value that was measured from that basis.

\section{Memory-free Quantum Cryptographic Protocols using the QFT}

We now present a two-pass quantum key distribution protocol using the quantum Fourier transform for the purpose of quantum key distribution.
The protocol allows Alice to share a secret key of $|k|$ random bits with Bob such that they have a higher probability of detecting an eavesdropper Eve compared to BB84, regardless of which qubits Eve measures.
Unlike the scheme presented in \cite{nagy2008key}, our method does not require quantum memory, i.e., holding the quantum state until a later time, such as when further classical communication is received.
Instead, all state modifications and measurements in this protocol are done upon reception of the state.
This scheme is therefore much more feasible to realize using current and near-future technology than schemes that require qubit storage.

\subsection{Quantum Key Distribution Protocol} \label{sec:qft-qkd}
The protocol consists of four steps: phase scrambling, key encoding, key decoding, and key extraction. The first three of these steps can be carried out using the circuit shown in Fig.~\ref{fig:twoPassCircuit}.
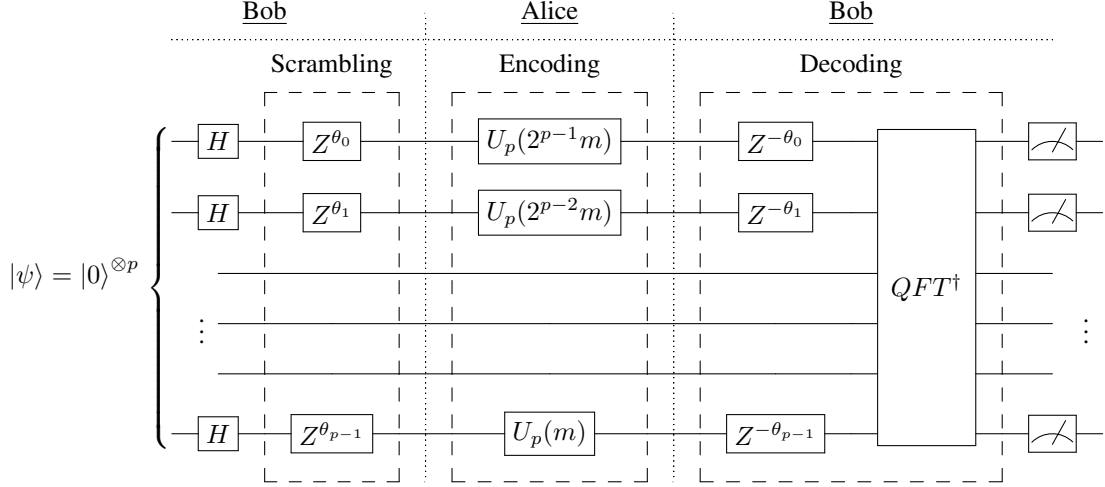
\begin{figure}[h]
    \centering
    \hspace*{.7cm}
    \centerline{
    \Qcircuit @C=.7em @R=1em {
    && \mbox{\underline{Bob}} &&&  && \mbox{\underline{Alice}} &&  &&& \mbox{\underline{Bob}} &&& \\
    &&&&&&&&&&&&&&& \\
    &&& \mbox{Scrambling} &&&& \mbox{Encoding} &&&&& \mbox{Decoding} &&& \\
    &&&&&&&&&&&&&&& \\
    \lstick{} & \gate{H} & \qw & \gate{Z^{\theta_0}} & \qw & \qw & \qw & \gate{U_p(2^{p-1}m)} & \qw & \qw & \qw & \gate{Z^{-\theta_0}} & \qw & \multigate{5}{QFT^\dagger} & \qw & \meter & \qw \\
    \lstick{} & \gate{H} & \qw & \gate{Z^{\theta_1}} & \qw & \qw & \qw & \gate{U_p(2^{p-2} m)} & \qw & \qw & \qw & \gate{Z^{-\theta_1}} & \qw & \ghost{QFT^\dagger} & \qw & \meter & \qw \\ 
    & \lstick{} & \qw & \qw & \qw & \qw & \qw & \qw & \qw & \qw & \qw & \qw & \qw & \ghost{QFT^\dagger} & \qw & \qw & \lstick{} \\ 
    & \lstick{\vdots} & \qw & \qw & \qw & \qw & \qw & \qw & \qw & \qw & \qw & \qw & \qw & \ghost{QFT^\dagger} & \qw & \qw & \lstick{\vdots} \\ 
    & \lstick{} & \qw & \qw & \qw & \qw & \qw & \qw & \qw & \qw & \qw & \qw & \qw & \ghost{QFT^\dagger} & \qw & \qw & \lstick{} \\ 
    \lstick{} & \gate{H} & \qw & \gate{Z^{\theta_{p-1}}} & \qw & \qw & \qw & \gate{U_p(m)} & \qw & \qw & \qw & \gate{Z^{-\theta_{p-1}}} & \qw & \ghost{QFT^\dagger} & \qw & \meter & \qw \\
    &&&&&&&&&&&&&& \\
    \protected\gategroup{2}{1}{2}{16}{0em}{.}
    \protected\gategroup{1}{6}{11}{6}{0em}{.}
    \protected\gategroup{4}{3}{11}{5}{0em}{--}
    \protected\gategroup{4}{7}{11}{9}{0em}{--}
    \protected\gategroup{1}{10}{11}{10}{0em}{.}
    \protected\gategroup{4}{11}{11}{15}{0em}{--}
    \inputgroupv{5}{10}{1.0em}{5.0em}{\ket{\psi} = \ket{0}^{\otimes p}\quad\quad\quad\quad} 
    }}
    \caption{The circuit for the quantum portion of the QFT-based QKD protocol presented in this paper.}
    \label{fig:twoPassCircuit}
\end{figure}

\paragraph*{\textbf{Scrambling}}
First, Alice and Bob determine the verification bit string $v$ along with the positions the bits will occupy in the transmitted messages.
This scheme, unlike BB84, does not require any secrecy regarding the verification bits, so we assume this is public information to which Eve has access.
Therefore, this scheme allows for an entirely quantum key distribution protocol since no classical communication is needed for verification.
Alternatively, Alice may randomly create a verification bit string in secret, similar to in BB84, but this requires classical communication between the two parties afterwards to verify it.
Possible schemes for verification are discussed in Sec.~\ref{sec:schemes}.

Bob creates a state with the quantum register $\ket{\psi} = \ket{+}^{\otimes p}$, where $p = |k| + |v|$, $|k|$ is the number of bits in the key to be shared, and $|v|$ is the number of verification bits. 
Here, $\ket{+} = H\ket{0} = \frac{1}{\sqrt{2}} (\ket{0} + \ket{1})$, where $H = \frac{1}{\sqrt{2}}\begin{bmatrix}
    1 & 1 \\ 1 & -1
\end{bmatrix}$.
Next, he applies $p$ random phase shifts, 
\begin{equation}
    \text{Scr}(\theta) = \bigotimes_{j=0}^{p - 1} Z^{\theta_{p - 1 - j}},
\end{equation}
where $\theta_j$ is the random phase applied to the $j$-th qubit and $Z^{\theta_j} = \begin{bmatrix}1 & 0 \\ 0 & e^{i \theta_j}\end{bmatrix}$. This results in the register $\ket{\psi} = \frac{1}{\sqrt{2^{p}}}(\ket{0} + e^{i \theta_{p - 1}} \ket{1})(\ket{0} + e^{i \theta_{p - 2}} \ket{1})...(\ket{0} + e^{i \theta_{0}} \ket{1})$.
Bob sends this state to Alice.

Preparing these bits with random phases that only Bob knows ensures that, upon interference by Eve, Eve will destroy and cannot replicate these phases.
In that case, when Bob reverses the random phases in the decoding step, the state needed to keep the $QFT^\dagger$ from causing entanglement will be destroyed, thereby causing entanglement and exposing the interference.
In some sense, these random phases may be considered a secret key, known only to Bob, that Alice can use without knowledge of it.

\paragraph*{\textbf{Encoding}}
Alice generates a bit string $k$, which is the key she will share with Bob. 
Alice then inserts the verification bits into either predetermined or randomly generated positions, resulting in the bit string $m$, which may also be interpreted as an integer value.
She then encodes $m$ by applying 
\begin{equation} \label{eq:encoding}
    \text{Enc}(m) = \bigotimes_{j=0}^{p - 1} U_p(2^{j} m),
\end{equation}
where $U_p(x) = Z^{\pi x / 2^{p -  1}}$.
Alice sends this state to Bob.

\paragraph*{\textbf{Decoding}}
Bob undoes his rotations, followed by $QFT^\dagger$, to retrieve $m'$
\begin{equation}
    \text{Dec}(\theta) = QFT^\dagger \bigotimes_{j=0}^{p - 1} Z^{-\theta_{p-1-j}}.
\end{equation}
The circuit for the steps up until this point is shown in Fig.~\ref{fig:twoPassCircuit}.

\paragraph*{\textbf{Extraction}}
Bob can then extract the verification bit positions, either predetermined or classically communicated after measurement, from $m'$ resulting in $v'$.
The remaining bits in $m'$ then form $k'$.
If $v' = v$, or if $v'$ differs from $v$ by less than some agreed-upon limit, Bob may assume that Eve did not interfere with the state and that $k' = k$, meaning that Alice and Bob now share a key $k$.
Otherwise, they throw the key out and try again.

Note that this circuit resembles the Quantum Phase Estimation circuit~\cite{nielsen_chuang_2010} once Bob's secret phases are undone, ensuring that Bob can reliably recover the key $k$ when there is no eavesdropper.

\subsection{Interference Resistance}
In order to always evade detection, Eve must measure in the basis dependent on Bob's (secret) random phases.
It is extremely unlikely that Eve guesses the basis of the random phases that Bob initially encoded.
In fact, the probability that Eve guesses correctly is 0 if Bob chooses rotations from a continuous distribution.
Therefore, any measurement Eve makes will collapse to a different state than Bob's initial state with probability 1.
When he undoes his state in the decoding stage, the measured qubits will have random phases, causing the $QFT^\dagger$ to apply random rotations to subsequent qubits, increasing the chance of mismatched verification qubits. 
This is explored in~\cite{nagy2008key} and Sec.~\ref{sec:probability}. 
Thus, it is difficult for Eve to avoid detection by Bob in any intercept-resend attack.
An optional authenticated classical channel can be used once Bob obtains the key to verify that the key was from Alice instead of Eve, thereby preventing an impersonation attack as well.

\subsection{Direct Message Encryption} \label{sec:qft-enc}

\begin{figure}
    \centering
    \hspace*{.3cm}
    \centerline{
    \Qcircuit @C=.45em @R=1em {
    &&&& \mbox{\underline{Alice}} &&& \mbox{\underline{Bob}} && \mbox{\underline{Alice}} &&& \mbox{\underline{Bob}} &&&& \\
    &&&&&&&&&&&&&&&& \\
    \lstick{} & \gate{H} & \qw & \gate{U_p(2^{p-1}m)} & \qw & \gate{Z^{\theta_0}} & \qw & \gate{Z^{\phi_0}} & \qw & \gate{Z^{-\theta_0}} & \qw & \gate{Z^{-\phi_0}} & \qw & \multigate{5}{QFT^\dagger} & \qw & \meter & \qw \\
    \lstick{} & \gate{H} & \qw & \gate{U_p(2^{p-2} m)} & \qw & \gate{Z^{\theta_1}} & \qw & \gate{Z^{\phi_1}} & \qw & \gate{Z^{-\theta_1}} & \qw & \gate{Z^{-\phi_1}} & \qw & \ghost{QFT^\dagger} & \qw & \meter & \qw \\
    & \lstick{} & \qw & \qw & \qw & \qw & \qw & \qw & \qw & \qw & \qw & \qw & \qw & \ghost{QFT^\dagger} & \qw & \qw & \lstick{} \\ 
    & \lstick{\vdots} & \qw & \qw & \qw & \qw & \qw & \qw & \qw & \qw & \qw & \qw & \qw & \ghost{QFT^\dagger} & \qw & \qw & \lstick{\vdots} \\ 
    & \lstick{} & \qw & \qw & \qw & \qw & \qw & \qw & \qw & \qw & \qw & \qw & \qw & \ghost{QFT^\dagger} & \qw & \qw & \lstick{} \\ 
    \lstick{} & \gate{H} & \qw & \gate{U_p(m)} & \qw & \gate{Z^{\theta_{p-1}}} & \qw & \gate{Z^{\phi_{p-1}}} & \qw & \gate{Z^{-\theta_{p-1}}} & \qw & \gate{Z^{-\phi_{p-1}}} & \qw & \ghost{QFT^\dagger} & \qw & \meter & \qw \\
    &&&&&&&&&&&&&&&& \\
    \protected\gategroup{2}{1}{2}{17}{0em}{.}
    \protected\gategroup{1}{7}{9}{7}{0em}{.}
    \protected\gategroup{1}{9}{9}{9}{0em}{.}
    \protected\gategroup{1}{11}{9}{11}{0em}{.}
    \inputgroupv{3}{8}{1em}{5em}{\ket{0}^{\otimes p}\quad} 
    }
    }
    \caption{The circuit for direct message encryption using the QFT.}
    \label{fig:threePassCircuit}
\end{figure}
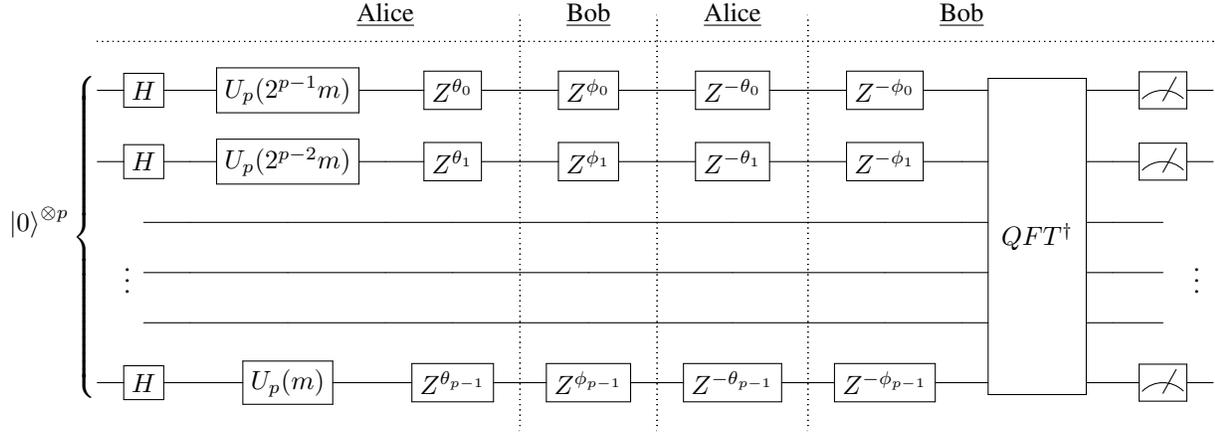

The QKD method introduced in Sec.~\ref{sec:qft-qkd} should not, however, be used to directly send sensitive information since Eve can measure Bob's initial transmission and know the state of the system when Alice receives it. 
Eve can then measure Alice's transmission in the QFT basis minus the measured state to retrieve Alice's string.
Since the verification bit positions may be public information, Eve could then know $k$ herself.
Instead, we introduce a QFT-based three-pass protocol here that can be used to directly send an encrypted message while preventing partial leakage.

Specifically, the protocol proceeds according to Kak's protocol~\cite{threeStage}, except the secret state to transfer is encoded by inserting the verification bits in their respective positions.
This results in the string $m$ to which the $\text{Enc}(m)$ transformation (Eq.~\ref{eq:encoding}) is applied and retrieved at the end with $QFT^{\dagger}$. 
Since Eve cannot measure before the message was concealed with Alice's secret phases and released, Eve is unlikely to recover any meaningful phase information unless she knows the secret phases.

The analyses of verification bit schemes discussed in Secs.~\ref{sec:probability} and~\ref{sec:schemes} apply to this three-pass scheme when $U_A = Scr(\theta)$ and $U_B = Scr(\phi)$, where $\theta_i$ is Alice's random phase on the $i$-th qubit and $\phi_i$ is Bob's.
The three-pass circuit with these unitary operations is shown in Fig.~\ref{fig:threePassCircuit}.

\section{Probability of Interference Detection}\label{sec:probability}

We may analyze the effectiveness of different verification layouts and schemes by calculating the likelihood of detecting interference from Eve.
A verification qubit contributes to this detection by the likelihood that it measures incorrectly, i.e., it measures as zero when intended to be one or vice versa.
This probability is determined by the error in phase rotation before the final Hadamard gate, $H$, on that qubit in the $QFT^\dagger$ gate.

In the below analysis we assume no transmission noise as well as no errors in gates, qubits, or circuit implementation.
In practice, one issue with the QFT encoding is the propagation of errors.
Especially during the Noisy Intermediate Scale Quantum (NISQ) computing era, errors may propagate and be misinterpreted as Eve interference with a higher probability than BB84.
The downside of this proposed scheme, then, is more often having to restart the protocol unnecessarily.
We leave such a noise-sensitive protocol analysis for future work.

\subsection{Probability of measuring correctly due to a phase error}
To always measure a qubit as zero in the Hadamard basis, the relative phase rotation of that qubit needs to be exactly zero.
So, the state before the final Hadamard in the $QFT^\dagger$ gate should be $\frac{1}{\sqrt{2}} (\ket{0} + \ket{1})$.
With an error in the phase rotation, say $\theta_e$, we have the state after the Hadamard gate, $H$, application as
\begin{equation}
    \frac{1}{\sqrt{2}} H (\ket{0} + e^{i \theta_e} \ket{1}) = \frac{1}{2}((1 + e^{i \theta_e})\ket{0} + (1 - e^{i \theta_e})\ket{1}).
\end{equation}
Hence, the probability that the qubit measures correctly as zero given some rotation error is 
\begin{equation}
    \Pr_{r,0}(\theta_e) = \left| \frac{1}{2}(1 + e^{i \theta_e}) \right|^2.
\end{equation}

In the case where the intended measurement is one, the state before the final Hadamard should be $\frac{1}{\sqrt{2}} (\ket{0} + e^{i \pi} \ket{1})$.
With an error in the phase rotation, say $\theta_e$, we have the state after the Hadamard gate application as $\frac{1}{2}((1 + e^{i (\pi + \theta_e)})\ket{0} + (1 - e^{i (\pi + \theta_e)})\ket{1})$.
Hence, the probability that the qubit measures correctly as one given some rotation error is
\begin{equation}
    \begin{aligned}
        \Pr_{r,1}(\theta_e) &= \left| \frac{1}{2}(1 - e^{i (\pi + \theta_e)}) \right|^2 \\
        &= \left| \frac{1}{2}(1 - e^{i \pi} e^{i \theta_e}) \right|^2 \\
        &= \left| \frac{1}{2}(1 + e^{i \theta_e}) \right|^2 \\
        &= \Pr_{r,0}(\theta_e).
    \end{aligned}
\end{equation}
Therefore, the probability that the qubit measures correctly is the same, given some phase error, whether the intended measurement is supposed to be zero or one.
So, we use the simpler expression as the equation to determine the probability of measuring correctly given some rotation error, $\Pr_r(\theta_e)$, as
\begin{equation}
    \begin{aligned}
        \Pr_r(\theta_e) &= \Pr_{r,0}(\theta_e) \\
        &= \left| \frac{1}{2}(1 + e^{i \theta_e}) \right|^2.
    \end{aligned}
\end{equation}

\subsection{Rotation effect from previous qubits}
Since $QFT^\dagger$ can be considered as each previous qubit having been measured before applying the controlled phase rotation~\cite{hiddenSubgroup}, the phase error that will be applied to a particular qubit, which we refer to as the target qubit, depends on the previous qubits that measured incorrectly.
If there are no errors in previous qubits, then the target qubit will have no additional phase error and hence measure correctly if not touched by the attacker.
If there is an error in a previous qubit, then the rotation effect depends on how far away the incorrect qubit is from the target qubit as well as whether the incorrect qubit was intended to be one or zero.

If the intention was zero, then it will have incorrectly measured one, triggering the controlled phase rotation in $QFT^\dagger$ and adding a negative phase error to the target.
If the intention was one, then it was supposed to trigger the controlled phase rotation but instead did not, effectively adding a positive phase error to the target.

The magnitude of the rotation is directly due to the phase rotation applied in the $QFT^\dagger$ gate.
Denote the actual measured outcome $A$ of $n$ previous qubits as
a binary array where $a_i = 1$ means that the qubit $i$ was measured as one, and $a_i = 0$ means that it was measured as zero.
We also denote the intended outcome of the qubit measurements as a binary array $B$, where $b_i \in \{ 0, 1 \}$ denotes the intended outcome of qubit $q_j$.
We say that the rotation effect applied to a qubit $q_t$ due to a previous qubit $q_j$, where $j < t$, is
\begin{equation}
    \label{eq:rotation}
    \begin{aligned}
        R(t, j, A, B) &= (b_j - a_j) \frac{\pi}{2^{t-j}}.
    \end{aligned}
\end{equation}

Finding the probability of a target qubit measuring correctly due to the rotation effect from previous qubits depends, then, on the intended and actual measurement outcomes of the previous qubits.
More precisely, the cumulative rotation error $\theta_e$ on a qubit $q_t$ given intended measurements $B$ and actual measurements $A$ is computed as
\begin{equation}
    \theta_e(t, A, B) = \sum_{j=0}^{t - 1} R(t, j, B, A).
\end{equation}

\subsection{Probablity of measuring a qubit correctly}

The probability that the previous qubits to the target qubit $q_t$ lie in the outcome $A$ is
\begin{equation}
    \begin{aligned}
        \Pr_p(t, A, B) &= \prod_{j=0}^{t - 1} \begin{cases}
            \Pr_c(j, B) & a_j = b_j \\
            1 - \Pr_c(j, B) & \text{otherwise},
        \end{cases}
    \end{aligned}
\end{equation}
 where $\Pr_c(j, B)$ is the probability of qubit at index $j$ measuring correctly.
This probability is determined by summing, for each possible measurement outcome of previous qubits, the product of the probability of obtaining that measurement outcome and the probability of measuring the target qubit correctly given the resulting rotation effect,
\begin{equation}
    \label{eq:correctnessProbability}
    \begin{aligned}
        \Pr_c(t, B) &= \sum_{A \in \{0,1\}^{t}} \Pr_p(t, A, B) \Pr_r(\theta_e(t, A, B)).
    \end{aligned}
\end{equation}

There are two special cases to this equation. 
(1) If the target qubit has been measured by Eve, then we simply say that $\Pr_c(t, B) = .5$.
This is because since Eve's best option is to ``undo'' some random rotation and measure in the X-basis, by expectation there will be an equal chance of measuring correctly or incorrectly after the subsequent $QFT^\dagger$ gate regardless of phase errors induced by previous measurements.
This is also explored in~\cite{nagy2008key}.
(2) If the target bit is zero and Eve did not measure it, then $\Pr_c(t, B) = 1$.

One can then determine the total probability of detecting Eve, here defined as the probability that at least one verification bit is measured incorrectly. 
This probability, given some $B$ and the bit indices of the verification bits, $V$, is

\begin{equation}
    \label{eq:detection-chance}
    \Pr_e(V, B) = 1 - \prod_{t \in V} \Pr_c(t, B).
\end{equation}

Note that a consequence of our protocol is that phase error contributions from different qubits can have different signs, and can therefore diminish the collective phase error.
For example, suppose $q_1$ incorrectly measured as 1, leading to a negative phase error contribution on future qubits, and suppose $q_3$ incorrectly measured as 0, leading to a positive phase error contribution on future qubits.
Then the collective phase error on $q_4$ is $-\frac{\pi}{2^3} + \frac{\pi}{2^1}$, a diminished phase error due to the opposing signs of $q_1$ and $q_3$.
As a consequence, the probability of detecting an attacker depends on the intended measurement outcome $B$.

Since the probability of detection is dependent on the specific $B$, it can then be computed by ranging over a set of possible intended measurement outcomes.
One way this overall chance may be computed is by averaging the obtained probabilities. 
However, it is safer to consider the case that phase errors introduced by the eavesdropper interfere with each other leading to the worst likelihood of detection.
This worst-case analysis can be accomplished by instead taking the minimum probability of detection, where the qubits are most likely to measure correctly despite any phase errors.

Note that this formula takes exponential time to compute.
We opt to analytically compute probabilities of detection for small numbers of qubits in Fig.~\ref{fig:graphs}.
However, when comparing verification schemes for implementations it may be preferable to instead simulate the probability of detection using the parameters of the physical implementation and scheme.

\section{Analysis of Verification Schemes}\label{sec:schemes}
\begin{figure*}
    \centering
    \includegraphics[width=\linewidth]{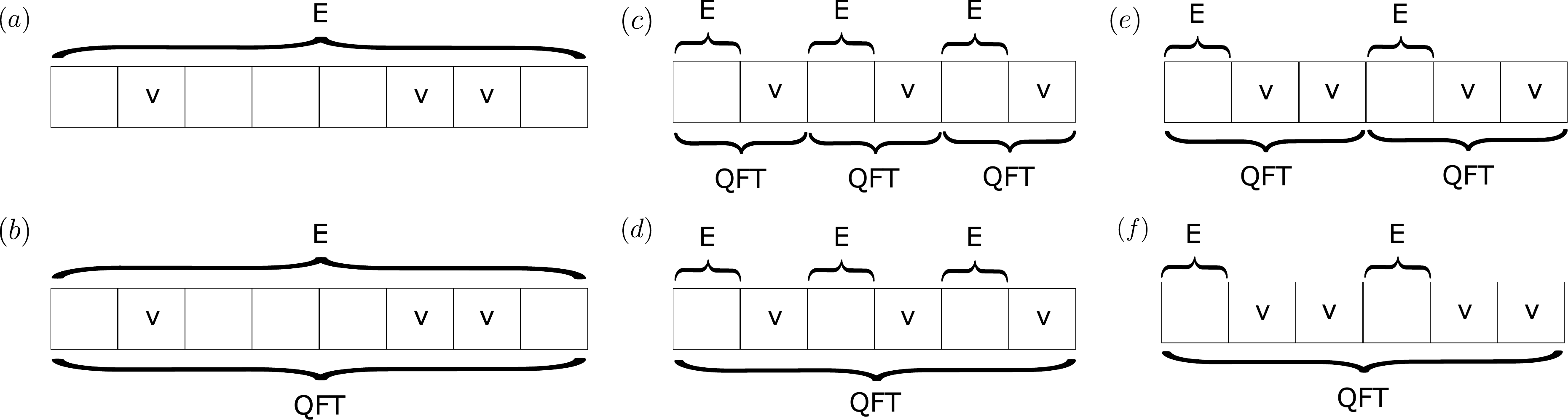}
    \caption{Example verification schemes. Each square represents a transmitted qubit. Qubits marked with $v$ are used for verification, and qubits marked with $E$ are qubits that Eve measured. Qubits marked with $QFT$ denote compartments of QFT encoding. (a) BB84 with random verification qubit placement and Eve measuring all qubits. (b) Two-pass, QFT-based QKD with random verification qubit placement and Eve measuring all qubits. (c) Compartmentalized two-pass, QFT-based QKD with one verification qubit per key qubit. (d) Uncompartmentalized two-pass, QFT-based QKD with one verification qubit per key qubit. (e) Compartmentalized two-pass, QFT-based QKD with two verification qubits per key qubit. (f) Uncompartmentalized two-pass, QFT-based QKD with two verification qubits per key qubit. Eve is assumed to only measure key qubits in (c)-(f).}
    \label{fig:verificationSchemes}
\end{figure*}

The probability of detecting interference from an eavesdropper is dependent on the verification scheme used.
We first discuss an analogous verification scheme to the one used in BB84~\cite{bb84}.
Notably our protocol with this scheme provides improvements against both complete and partial attacks.
We then present a collection of public verification schemes and analyze their theoretical detection performance.
Notably the probability of detecting an eavesdropper using public verification schemes in protocols like BB84 is 0, while our protocol still allows for detecting the eavesdropper in these cases.

\subsection{Analogous Verification Scheme to BB84}
The verification scheme in BB84 involves random post-selection of verification qubits.
Specifically, Alice and Bob agree on a random subset of qubits to verify.
One such choice of random qubits is shown in Fig.~\ref{fig:verificationSchemes}(a).
We can therefore describe an analogous verification scheme for the protocol introduced in Sec.~\ref{sec:qft-qkd} where the same verification qubits are chosen as shown in Fig.~\ref{fig:verificationSchemes}(b).
To present a fair comparison, we assume the attacker does not know which bits are verification bits \textit{a priori}.
This can be achieved by Alice secretly choosing the verification qubit positions when constructing her message and then later conveying those positions to Bob after his measurements.

We may begin analyzing the chance of detection in the simple case that Eve measures all transmitted qubits, the typical attacker model when analyzing QKD schemes.
BB84, shown in Fig.~\ref{fig:verificationSchemes}(a), has a $1 - (.75)^{|v|}$ chance of detecting Eve when all the verification qubits are measured.
This is because Eve is only detectable if she both picks the incorrect basis for a qubit (probability $0.5$) and Bob measures the wrong outcome for that qubit (probability $0.5$), resulting in only a probability of detection of $0.25$ for each individual verification qubit.
However, in our proposed protocol with an analogous verification scheme, shown in Fig.~\ref{fig:verificationSchemes}(b), each verification qubit will have a $.5$ chance of measuring correctly (Eq.~ \ref{eq:correctnessProbability}).
This results in an overall probability of $1 - (.5)^{|v|}$ of Eve being detected.
In this case, QDK using QFT outperforms BB84 significantly.

It is also the case that our proposed QKD protocol outperforms BB84 in detecting an eavesdropper when Eve measures only a subset of the transmitted qubits.
This scenario corresponds to the partial information leakage attack mentioned in Sec.~\ref{sec:partial-information}.
Let the chance of detecting an eavesdropper be defined by Eq.~\ref{eq:detection-chance}, except where $\Pr_c(t, B)$, as defined in the QFT case by Eq.~\ref{eq:correctnessProbability}, can be differently defined for different protocols. 
For clarity we use $\Pr_{e,BB84}(V, B)$ to denote probabilities for the BB84 protocol, and $\Pr_{e,QFT}(V, B)$ to denote probabilities for our protocol.

\begin{lemma}
\label{lemma:detection-as-good-as-bb84}
The probability of detection for our protocol is at least as great as that for BB84: $\Pr_{e,BB84}(V, B) \leq \Pr_{e,QFT}(V, B)$
\end{lemma}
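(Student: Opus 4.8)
\textit{Proof proposal.}
The plan is to reduce the stated inequality between detection probabilities to a term-by-term comparison of the per-qubit correctness probabilities. Both protocols share the form $\Pr_e(V,B) = 1 - \prod_{t \in V}\Pr_c(t,B)$ from Eq.~\ref{eq:detection-chance}, with only the definition of $\Pr_c(t,B)$ differing between them. Writing $\Pr_{c,QFT}$ and $\Pr_{c,BB84}$ for the two correctness probabilities, the claim $\Pr_{e,BB84}(V,B) \leq \Pr_{e,QFT}(V,B)$ is therefore equivalent to $\prod_{t \in V}\Pr_{c,QFT}(t,B) \leq \prod_{t \in V}\Pr_{c,BB84}(t,B)$. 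First I would establish this factor-wise: for every verification qubit $t \in V$, I claim $\Pr_{c,QFT}(t,B) \leq \Pr_{c,BB84}(t,B)$.

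The factor-wise comparison splits into two cases according to whether Eve measures qubit $t$. If Eve measures $t$, then special case (1) of the QFT analysis pins $\Pr_{c,QFT}(t,B)=\tfrac12$, whereas in BB84 a measured verification qubit escapes detection unless Eve both picks the wrong basis and Bob subsequently reads the wrong outcome, giving $\Pr_{c,BB84}(t,B)=\tfrac34$; hence $\tfrac12 \leq \tfrac34$. If Eve does not measure $t$, then in BB84 the qubit arrives unperturbed and passes with certainty, so $\Pr_{c,BB84}(t,B)=1$, while $\Pr_{c,QFT}(t,B)$ is a probability and is therefore at most $1$ (equal to $1$ exactly in QFT special case (2), and possibly strictly smaller when phase errors from earlier measured qubits leak onto $t$ through the $QFT^\dagger$). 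In both cases the desired inequality holds.

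Finally I would lift the factor-wise inequality to the product. Because every $\Pr_c(\cdot)$ lies in $[0,1]$, and in particular is nonnegative, the product of nonnegative reals is monotone in each factor, so $\prod_{t \in V}\Pr_{c,QFT}(t,B) \leq \prod_{t \in V}\Pr_{c,BB84}(t,B)$ follows at once; subtracting each side from $1$ reverses the inequality and yields $\Pr_{e,BB84}(V,B) \leq \Pr_{e,QFT}(V,B)$.

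The only genuinely delicate point I anticipate is the unmeasured-qubit case: one must confirm that nothing in the QFT analysis can push $\Pr_{c,QFT}(t,B)$ above $1$. This is immediate, since Eq.~\ref{eq:correctnessProbability} expresses it as a convex combination of the values $\Pr_r(\theta_e)\in[0,1]$ weighted by the probabilities $\Pr_p$, which sum to one over all outcomes $A$. It is nonetheless worth stating explicitly, because it is precisely the place where the propagation of phase error --- the very feature that strengthens the QFT scheme --- could superficially appear to aid the attacker. The measured-qubit case, by contrast, is a direct numerical comparison and presents no obstacle.
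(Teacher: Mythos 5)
Your proposal is correct and takes essentially the same route as the paper's own proof: a factor-wise comparison of per-qubit correctness probabilities over the verification qubits (measured qubit: $\tfrac12 \leq \tfrac34$; unmeasured qubit: $\Pr_{c,QFT}(t,B) \leq 1 = \Pr_{c,BB84}(t,B)$), lifted to the product by monotonicity of products of numbers in $[0,1]$ and then to the detection probabilities via Eq.~\ref{eq:detection-chance}. The paper merely makes the coupling of Eve's measurement choices across the two protocols explicit and spends two additional cases on non-verification qubits, which your two-case split absorbs into the observation that phase-error propagation can only push $\Pr_{c,QFT}$ below $1$, never above it.
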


\begin{proof}[Proof sketch]
    For each qubit, assume Eve measures a qubit from our protocol if and only if Eve measures the corresponding qubit in the BB84 protocol (proof by coupling).

    By the definition of $\Pr_e(V, B)$,
    ${\prod_{t \in V} \Pr_{c,BB84}(t, B)
    \geq \prod_{t \in V} \Pr_{c,QFT}(t, B)}
    \implies
    {\Pr_{e,BB84}(V, B) \leq \Pr_{e,QFT}}(V, B)$.
    
    We then consider each qubit by ascending index $i$.
    Since we assume noiseless protocols, let $\Pr_c(i, B) = 1$ unless otherwise specified.
    \begin{case}[Eve does not measure $q_i$, $q_i$ is not a verification qubit]
        No effect in this case.
    \end{case}
    \begin{case}[Eve measures $q_i$, $q_i$ is not a verification qubit] \label{case:eve-not-verif}
        In BB84, non-verification qubits are independent from verification qubits. 
        As a result, measuring non-verification qubits does not effect verification qubits.

        In our protocol, this could potentially cause a phase rotation on subsequent verification qubits, so $\forall j. \; j > i \implies \Pr_{c,QFT}(j, B) \leq 1$.
        In other words, measuring a non-verification qubit may introduce a probability of error in later verification qubits.
    \end{case}
    \begin{case}[Eve does not measure $q_i$, $q_i$ is a verification qubit] \label{case:eve-no-measure-verif}
        For BB84, verification qubits are independent from non-verification qubits. As a result, there is no effect on verification qubits if it is untouched by Eve.

        For our protocol, if no previous qubit was measured by Eve, then there is no potential measurable effect. In this case $\Pr_{c,QFT}(i, B) = 1$, equal to the probability of evasion in BB84. However, if a previous qubit was measured by Eve, then by Case \ref{case:eve-not-verif} $\Pr_{c,QFT}(i, B) \leq 1$, meaning the probability of evasion for this qubit under our protocol is less than the probability of evasion for this qubit under BB84.

        Overall for this case, it is clear that $\Pr_{c,QFT}(i, B) \leq \Pr_{c,BB84}(i, B)$
    \end{case}
    \begin{case}[Eve measures $q_i$, $q_i$ is a verification qubit] \label{case:eve-measure-verif}
        For BB84, $\Pr_{c,BB84}(i, B) = .75$ as previously discussed.
        For our protocol, $\Pr_{c,QFT}(i, B) = .5$ as previously discussed.
        It is therefore clear that $\Pr_{c,QFT}(i, B) < \Pr_{c,BB84}(i, B)$ for this case.
    \end{case}

    Since ${\forall t \in V. \; {0 \leq \Pr_c(t, B) \leq 1}}$ and by Cases \ref{case:eve-no-measure-verif} and \ref{case:eve-measure-verif} 
    \begin{equation}
        \forall t \in V. \; {\Pr_{c,QFT}(t, B) \leq \Pr_{c,BB84}(t, B)},
    \end{equation} then $\prod_{t \in V} \Pr_{c,BB84}(t, B)
    \geq \prod_{t \in V} \Pr_{c,QFT}(t, B)$.
    Finally, by the definition of $\Pr_e$,
    \begin{equation}
        \Pr_{e,BB84}(V, B) \leq \Pr_{e,QFT}(V, B).
    \end{equation}
\end{proof}

\begin{figure*}[h]
    \centering
    \includegraphics[width=\linewidth]{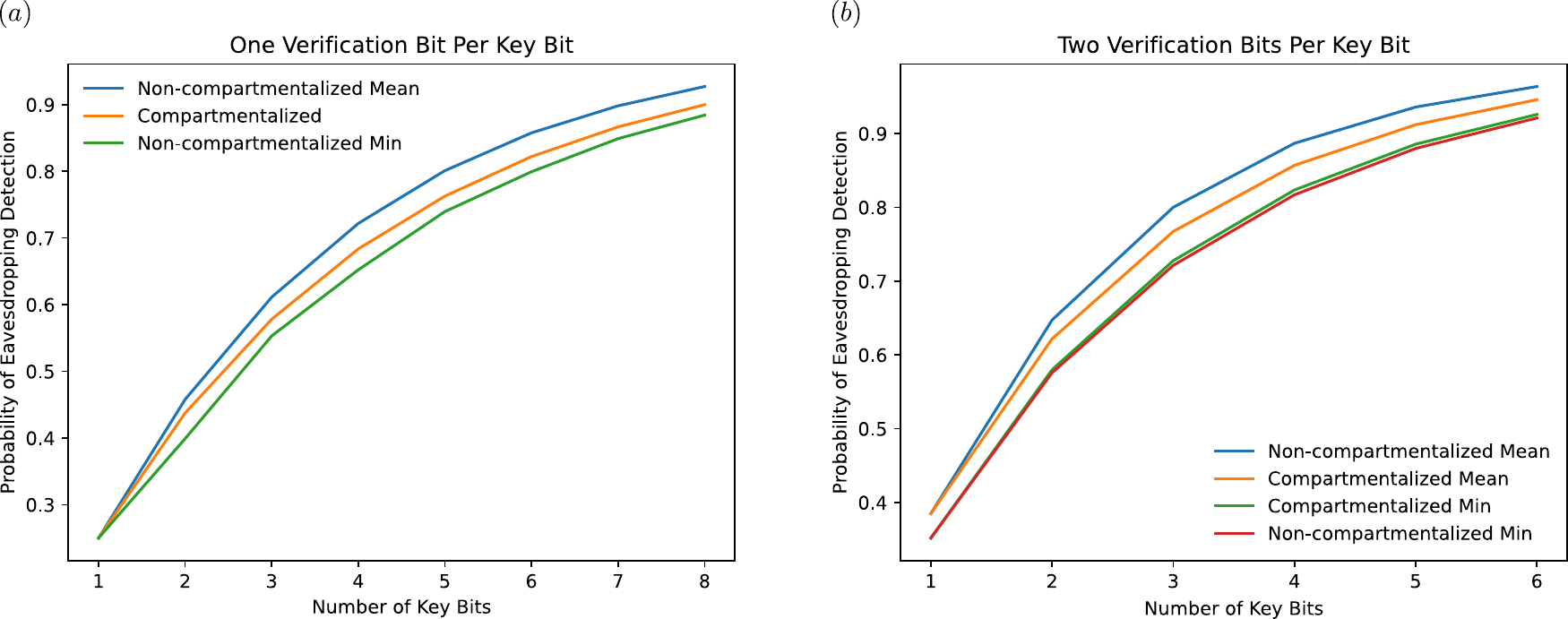}
    \caption{The probabilities of detecting an eavesdropper over an increasing number of key qubits with (a) one and (b) two verification qubits per key qubit.}
    \label{fig:graphs}
\end{figure*}

\subsection{Public Verification Schemes}

Unlike BB84 and other QKD protocols with independent key and verification qubits, our protocol is resilient against an attacker even if they know the locations and intended values of the verification qubits.
In this section we will therefore examine public verification schemes.
Specifically we analyze representative schemes in which all verification qubits are intended to measure zero.

We analytically compute the probabilities of detection for four different public verification schemes in which key qubits are uniformly interleaved with verification qubits.
In two schemes, shown in Figs.~\ref{fig:verificationSchemes}(c) and (d), each key qubit is followed by one verification qubit.
An eavesdropper's optimal strategy then is to measure only the key qubits and leave the verification qubits untouched as indicated by the \texttt{E} over each key qubit in the figure.
This knowledge would break most QKD schemes, including BB84. 
However, our protocol, based on the QFT transform, can still detect an eavesdropper in this scenario.
Alternative schemes include one where each key qubit is followed by two verification qubits (Figs.~\ref{fig:verificationSchemes}(e) and (f)).

These schemes can be further modified by what we call compartmentalization.
That is, instead of encoding and decoding all the bits with a single QFT, the message would be split into smaller groups that are independently encoded and decoded with the QFT.
These groups are denoted by several \texttt{QFT}s in Fig.~\ref{fig:verificationSchemes}.

For example, Fig.~\ref{fig:verificationSchemes}(c) shows a scheme where each group consists of a key and verification qubit pair.
Notably, this public verification scheme matches the chance of detection from BB84 when all qubits are measured, $1 - (.75)^{|v|}$ as per Eq.~\ref{eq:correctnessProbability}.
This is quite interesting because the worst case for Alice and Bob in this public scheme, where Eve deliberately chooses her measured qubits with knowledge, is equal to the best case for Alice and Bob in BB84, where Eve measures all qubits.
This emphasises the benefit of using a QFT-based protocol to increase eavesdropping detection.

As mentioned in Sec.~\ref{sec:probability}, we can compute the average  or pessimistic probability of detection using the \texttt{mean} or \texttt{min} statistic functions, respectively.
When comparing schemes that use compartmentalization versus ones that do not, one can see that the statistics function used makes a difference.
In Fig.~\ref{fig:graphs}(a), for example, we compare the compartmentalized scheme from Fig.~\ref{fig:verificationSchemes}(c) against the non-compartmentalized scheme from Fig.~\ref{fig:verificationSchemes}(d).
One can see that the non-compartmentalized version is better in the average case, using the $\texttt{mean}$ function, but worse in the worst-case, using the $\texttt{min}$ function.
This is due to the fact that, in the worst-case, the positive and negative phase errors across all the qubits interfere with each other and effectively decrease the final phase error in the verification qubits.
On the other hand, in the average case, the phase errors across the individual qubits could amplify the final phase error on a subsequent verification qubit.
In contrast, the compartmentalized version limits the number of entangled qubits that could contribute to a phase error on the verification qubit, leading to a lower chance of detection compared to the non-compartmentalized average-case.
Notably, since it is infeasible to exactly compute these probabilities with large numbers of qubits, it is not clear how significant this advantage will be as the number of qubits grows.

Looking at Fig.~\ref{fig:graphs}(b), which compares the compartmentalized scheme from Fig.~\ref{fig:verificationSchemes}(e) against the non-compartmentalized scheme from Fig.~\ref{fig:verificationSchemes}(f), we again see that the non-compartmentalized scheme outperforms the compartmentalized one on average.
However, in the worst-case the compartmentalized and non-compartmentalized schemes have similar detection rates since the worst-case interference pushes the phase errors towards zero in both cases, which we call dampening.
Though similar, one can see that the compartmentalized detection rate is slightly higher as it limits possible dampening by reducing the number of subsequent qubits that can negate the effect of earlier phase errors.

One can see that different verification schemes can be better for different use cases, and
Eq.~\eqref{eq:correctnessProbability} allows one to analyze different schemes to try to pick one that is best-suited for their specific case.
If two parties were sending many messages, where each message in particular didn't need high security, but they wanted to detect Eve with high likelihood over the course of many messages, then they might consider using QFT encoding on all the qubits together in a non-compartmentalized way.
If they wanted high security on each message, on the other hand, they may want to use a compartmentalized version of the scheme.

In all these schemes, the chance of detecting an attacker is below $1 - (.5)^{|v|}$, which is the chance of detecting an attacker who measures all qubits, as is generally assumed in the case without knowledge of the verification scheme.
However, unlike BB84 where the chance of detecting an attacker given a public verification scheme is 0, with our protocol the chance of detecting an attacker grows with each qubit they measure.

\section{Conclusion}
We have shown a novel two-pass approach to QKD using the QFT as well as a QFT-based approach to sending encrypted messages directly.
These protocols allow for a greater chance of detecting an eavesdropper against certain attacks, including partial information leakage and many-copies attacks, than independent qubit measurement based approaches such as BB84.
This paper then provides a formula to determine the probability of detecting an eavesdropper's presence under any verification scheme.
As we showed through key examples, this formula allows for analyzing verification schemes of one's own to help determine the best scheme for each situation.
Notably, our analysis shows that our QFT-based approach allows for detecting an eavesdropper even when the verification scheme is known to the attacker.
Altogether our protocols provide for robust eavesdropping detection against more powerful eavesdroppers than other schemes can handle.
We leave it to future work to validate that implementations of our protocols on real quantum devices match our theoretical results.

\section*{Acknowledgments}
The authors thank John Drew Wilson for useful discussions.



\end{document}